\newcommand{\ignore}[1]{}
\newcommand{\cand}{\mathcal{C}}
\newcommand{\ballots}{\mathcal{B}}
\newcommand{\ballot}{b}
\newcommand{\election}{\mathcal{L}}
\DeclareMathOperator*{\basis}{basis}
\title{Risk-Limiting Audits for Condorcet Elections\thanks{This work was
supported by the University of Melbourne's Research Computing Services and the
Petascale Campus Initiative; and by the Australian Research Council (Discovery
Project DP220101012; OPTIMA ITTC IC200100009).}}
\author{
Michelle Blom     \inst{1}   \orcidID{0000-0002-0459-9917}  \and
Peter J. Stuckey  \inst{2}   \orcidID{0000-0003-2186-0459}  \and
Vanessa Teague    \inst{3,4} \orcidID{0000-0003-2648-2565}  \and
Damjan Vukcevic   \inst{5}   \orcidID{0000-0001-7780-9586}}
\authorrunning{Blom, Stuckey, Teague, Vukcevic}
\institute{
School of Computing and Information Systems, University of Melbourne,
Parkville, Australia.
\email{michelle.blom@unimelb.edu.au}
\and
Department of Data Science and AI, Monash University, Clayton, Australia
\and
Thinking Cybersecurity Pty.\ Ltd., Melbourne, Australia
\and
Australian National University, Canberra, Australia
\and
Department of Econometrics and Business Statistics, Monash University, Clayton,
Australia}
\begin{document}

\maketitle

\begin{abstract}
Elections where electors rank the candidates (or a subset of the candidates) in
order of preference allow the collection of more information about the
electors' intent.  The most widely used election of this type is Instant-Runoff
Voting (IRV), where candidates are eliminated one by one, until a single
candidate holds the majority of the remaining ballots.  Condorcet elections
treat the election as a set of simultaneous decisions about each pair of
candidates. The Condorcet winner is the candidate who beats all others in these
pairwise contests. There are various proposals to determine a winner if no
Condorcet winner exists. In this paper we show how we can efficiently audit
Condorcet elections for a number of variations.  We also compare the audit
efficiency (how many ballots we expect to sample) of IRV and Condorcet
elections.
\keywords{Condorcet Elections \and Risk-Limiting Audits \and Instant-Runoff
Voting \and Ranked Pairs}
\end{abstract}


\section{Introduction}

In ranked or preferential vote elections, each ballot comprises an ordered list
of (some or all of) the candidates.  The ballot is interpreted as a statement
that each candidate on the list is preferred by the voter to all the candidates
after it, and any that don't appear on the ballot.  Condorcet elections treat
each election as a series of two-way contests between each pair of candidates
$A$ and $B$, by saying $A$ beats $B$ if the number of ballots that prefer $A$
over $B$ is greater than those that prefer $B$ over $A$.  A Condorcet winner
exists if there is a single candidate that beats all other candidates.  But it
is quite possible to have ranked vote elections with no Condorcet winner. In
this case there are many alternate strategies/election systems that can be used
to choose a winner.

Risk-Limiting Audits (RLAs) test a reported election outcome by sampling ballot
papers and will correct a wrong outcome with high probability (by requiring a
full manual count of the ballots).  They will not change the outcome if it is
correct.  The \emph{risk limit}, often denoted $\alpha$, specifies that a wrong
outcome will be corrected with probability at least $1 - \alpha$.  RLAs for
Instant-Runoff Voting (IRV) can be conducted efficiently using RAIRE
\cite{blom2019raire}. RAIRE generates `assertions' which, if true, rule out all
outcomes in which the reported winner did not win. Assertions form the basis of
an RLA that can be conducted using the SHANGRLA framework \cite{shangrla}.

In this paper we show how to use the assertion-based methodology of Blom et
al.\ \cite{blom2021assertion} to form a set of assertions sufficient to conduct
an RLA for a variety of Condorcet elections.  Assertions with linear dependence
on transformations of the votes can easily be transformed to canonical assorter
form for SHANGRLA. We contrast the estimated difficulty of these audits, in
terms of sample sizes required, against auditing IRV using RAIRE.

For ranked vote elections that have a Condorcet winner, we first consider an
audit that checks that the reported winner is indeed the Condorcet winner
(\autoref{sec:Condorcet}). For an election with $n$ candidates, this requires
$n-1$ assertions comparing the reported winner to each reported loser. We then
consider Ranked Pairs, a Condorcet method that builds a preference relation
over candidates on the basis of the strength of pairwise defeats
(\autoref{sec:RankedPairs}). We find, that for elections with a Condorcet
winner, the expected sample sizes required to check that the reported winner is
the Condorcet winner, or to audit as a Ranked Pairs or IRV election, are
usually similar. In some instances, particularly those where the winner is
decided by who is eliminated in the second-last round of IRV, we see more
substantial differences in auditing difficulty. 

To demonstrate the practicality of our auditing methods, we use IRV datasets
from the Australian New South Wales (NSW) lower house elections in 2015 and
2019, in addition to a series of IRV elections held across the United States
between 2007 and 2010. All of these elections have a Condorcet winner.

Finding ranked vote datasets from real elections that did not have a Condorcet
winner was challenging. We were able to find some Single Transferable Vote
(STV) elections, and some datasets from Preflib\footnote{\url{www.preflib.org},
accessed 14 Mar 2023}, that met this criterion. For these instances, RAIRE
often struggled to terminate when finding an appropriate set of assertions to
audit. Auditing these elections as if they were Ranked Pairs elections was more
successful. We present these results in \autoref{sec:Results}.

We finally consider three additional Condorcet methods: Minimax
(\autoref{sec:Minimax}), Smith (\autoref{sec:Smith}), and Kemeny-Young
(\autoref{sec:KY}). We found that audits for these methods were generally not
practical. Minimax and Smith default to electing the Condorcet winner when one
exists, and in this case we can simply use the method outlined in
\autoref{sec:Condorcet}. On our instances without a Condorcet winner, we
generally did not find an audit for Minimax or Smith, with our proposed
methods, that was not a full manual count.


\section{Preliminaries}

In this paper, we consider \emph{ranked vote elections} that require voters to
cast a ballot in which they rank available candidates in order of preference,
and a single winner is determined. For example, in an election with candidates
$A$, $B$ and $C$, a voter indicates which of these candidates is their most
preferred, their second most preferred, and so on. Depending on the
jurisdiction, voters may be required to rank all candidates (e.g.,
[$A$,$B$,$C$]) or express a partial ranking (e.g., [$A$, $C$]). The latter
ballot is interpreted as expressing a preference for $A$ over $C$, and that
both $A$ and $C$ are preferred to all other candidates. The number of ballots
on which a candidate is ranked first is their first-preference tally.

We define a \emph{ranked vote election} as a pair $\election = (\cand,
\ballots)$ where $\cand$ is the set of candidates and $\ballots$ the multiset
of ballots cast. A ballot $\ballot$ is a sequence of candidates $\pi$, listed
in order of preference (most popular first), without any duplicates but also
without necessarily including all candidates. We use list notation to define
the ranking on a ballot (e.g., $\pi = [c_1,c_2,c_3,c_4]$). Given an election
$\election$, the election system determines which candidate $c \in \cand$ is
the winner.\footnote{Ties are also possible, but very rare for elections with
many ballots.}

\subsection{Instant-Runoff Voting (IRV)}\label{sec:Preferences}

IRV is a type of ranked vote election. To determine the winner in an IRV
election, each candidate is initially awarded all votes in which they are
ranked first (known as their \emph{first preferences}). The candidate with the
smallest tally is eliminated, and the ballots in their tally pile are
redistributed to the next most preferred candidate on the ballot who is still
standing. For example, a ballot with the ranking [$A$, $B$, $C$] is first
assigned to candidate $A$. If $A$ were to be eliminated, the ballot would be
given to candidate $B$, provided $B$ has not already been eliminated.
Subsequent rounds of elimination are performed in which the candidate with the
smallest tally is eliminated, and the ballots in their tally pile
redistributed, until we have one candidate left, or one of the remaining
candidates has the majority of votes. This candidate is declared the winner.

\begin{table}[t]
\centering
\caption{Distribution of ballot types for two example elections.}
\label{tab:Example}
\begin{tabular}{c@{~~~}c}
\begin{tabular}{|l|r|}
\hline
Ballot Signature & Votes \\
\hline
$A, B$    & 5,000 \\
$B, C$    & 2,500 \\
$C, A, B$ &   500 \\
$B, A$    &   300 \\
\hline
Total Votes & 8,300  \\
\hline
\end{tabular}
&
\begin{tabular}{|l|r|}
\hline
Ballot Signature & Votes \\
\hline
$A, C, B$ & 20,000 \\
$B, C, A$ & 19,000 \\
$C$       &  5,000 \\
\hline
Total Votes & 44,000   \\
\hline
\end{tabular} \\
(a) Election 1  &
(b) Election 2
\end{tabular}
\end{table}

\begin{table}[t]
\centering
\caption{Election 3 example: (a) distribution of ballot types; (b) calculation
of comparative tallies $T(i \succ j)$ for row $i$ and column $j$; (c)
calculation of scores $s(i,j)$.}
\label{tab:Example2}
\begin{tabular}{c@{~~~}c@{~~~}c}
\begin{tabular}{|l|r|}
\hline
Ballot Signature & Votes \\
\hline
$A, B, D, C$ & 7,000 \\
$A, C, B, D$ & 2,000 \\
$B, C, D, A$ & 4,000 \\
$B, D, A, C$ & 6,000 \\
$C, A, B, D$ & 2,000 \\
$C, D, A, B$ & 7,000 \\
$D, C, A, B$ & 1,000 \\
\hline
Total Votes & 27,000 \\
\hline
\end{tabular} \
&
\begin{tabular}{|l|rrrr|}
\hline
$T(i \succ j)$ & $A$ & $B$ & $C$ & $D$ \\
\hline
$A$ &     & 19k & 15k & 11k \\
$B$ & 10k &     & 17k & 21k \\
$C$ & 14k & 12k &     & 15k \\
$D$ & 18k &  8k & 14k &     \\
\hline
\end{tabular}
&
\begin{tabular}{|l|rrrr|}
\hline
$s(i,j)$ & $A$ & $B$ & $C$ & $D$ \\
\hline
$A$ &       &     9k &    1k & $-$7k \\
$B$ & $-$9k &        &    5k &   13k \\
$C$ & $-$1k &  $-$5k &       &    1k \\
$D$ &    7k & $-$13k & $-$1k &       \\
\hline
\end{tabular} \\
(a) & (b) & (c) 
\end{tabular}

\end{table}

Consider the distribution of votes present in two example elections shown in
\autoref{tab:Example}. In Election 1, candidates $A$, $B$, and $C$ have first
preference tallies of 5000, 2800, and 500, respectively. When viewed as an IRV
election, candidate $A$ has the majority of votes on first preferences and is
declared the winner.  In Election 2, $A$, $B$, and $C$ start with 20000, 19000,
and 5000 votes, respectively. Candidate $C$ is eliminated, with all of their
ballots \textit{exhausted}. Candidate $A$ has more votes than $B$ and is
declared the winner.  For the distribution of ballots for Election 3 shown in
\autoref{tab:Example2}(a), when viewed as an IRV election, candidate $D$ is
eliminated first, then $A$, leaving $B$ to win with a majority of votes.



\section{Risk-Limiting Audits for Condorcet Winners}
\label{sec:Condorcet}

In \emph{Condorcet elections} we consider the ballots as applying to
``separate'' decisions about each  pair of candidates $i,j \in \cand$.  A
ballot $b$ prefers $i$ over $j$ if $i$ appears before $j$ in $b$, or $i$
appears on $b$ and $j$ does not.  We define $T(i \succ j)$ to be the tally
(i.e., number) of ballots $b \in \ballots$ that prefer $i$ over $j$. Similarly,
we define $T(j \succ i)$ to be the tally of ballots $b \in \ballots$ that
prefer $j$ over $i$.  Note that $T(i \succ j) + T(j \succ i) \leqslant
|\ballots|$ as some ballots may mention neither $i$ nor $j$.

For the competition between a pair of candidates $i,j \in \cand$, we define a
\emph{score},
\begin{equation}
    s(i, j) = T(i \succ j) - T(j \succ i). \label{eqn:Strength}
\end{equation}

In an election that satisfies the Condorcet winner criterion, the winner is the
candidate $w \in \mathcal{C}$ for which $s(w,c) > 0$ for all $c \in \mathcal{C}
\setminus \{w\}$.  In Election 1 of \autoref{tab:Example}, candidate $A$ is the
Condorcet winner, because $T(A \succ B)$ is 5500, $T(B \succ A)$ is 2800, $T(A
\succ C)$ is 5300,  and $T(C \succ A)$ is 3000; giving $s(A,B) = 2700$ and
$s(A,C) = 2300$.  In Election 2 candidate $C$ is the Condorcet winner since
$T(A \succ C)$ is 20000, and $T(C \succ A)$ is 24000; $T(B \succ C)$ is 19000,
and $T(C \succ B)$ is 25000; giving $s(C,A) = 4000$ and $s(C,B)$ = 6000. Note
how a Condorcet winner need not be the IRV winner (which is $A$).  For Election
3, shown in \autoref{tab:Example2}(a), the tallies $T(i \succ j)$ for row $i$
and column $j$ are shown in \autoref{tab:Example2}(b), and the scores $s(i,j)$
for row $i$ and column $j$ are shown in \autoref{tab:Example2}(c).  There is no
Condorcet winner since one of the scores in each row is negative.

We can audit the correctness of the winner in these elections by checking the
assertions $s(w,c) > 0$ for all $c \in \mathcal{C} \setminus \{w\}$.  This is
similar to auditing first-past-the-post elections where we compare the tallies
of two candidates, but rather than comparing tallies of ballots for each
candidate we compare tallies of ballots that favor one candidate over the
other.  The tally of $w$, when compared with candidate $c$, is given by $t_w =
T(w \succ c)$, and that of $c$ is $t_c = T(c \succ w)$.  Our assertion states
that $t_w > t_c$ which can be transformed into an assorter as described by Blom
et al.\ \cite{blom2021assertion}, and audited with SHANGRLA \cite{shangrla}.
These assertions can be used to audit any Condorcet method where a Condorcet
winner exists. We simply have to check that the reported winner is the
Condorcet winner. We cannot do this for the election of \autoref{tab:Example2},
as it does not have a Condorcet winner.  If the reported results indicate a
Condorcet winner, but this is not actually the case, then at least one of
assertions will be false, so with probability at least $1-\alpha$ the audit
will not certify. Hence this is a valid RLA.


\section{Risk-Limiting Audits for Ranked Pairs Elections}
\label{sec:RankedPairs}

Ranked Pairs \cite{tideman1987independence} is a type of Condorcet election
that determines a winner even if no Condorcet winner exists.  Ranked Pairs
elections build a preference relation among candidates. First, we compute a
score for each pair of candidates, as per \autoref{eqn:Strength}.  Pairs with a
positive score are called \textit{positive majorities}.   We consider each of
these positive majority pairs in turn, from highest to lowest in score, and
build a directed acyclic graph $\mathcal{G}$ containing a node for each
candidate, and edges representing preference relationships. When we consider a
pair, ($i$, $j$), we add an edge from $i$ to $j$ in $\mathcal{G}$, if there
does not already exist a path from $j$ to $i$.  If such a path  exists, this
means that the preference $i \succ j$ is inconsistent with the stronger
preference relationships already added to $\mathcal{G}$. We therefore ignore
pair ($i$, $j$) and move on to the next pair. If ever there is a candidate $w$
s.t.\ there is a path in $\mathcal{G}$ from $w$ to all others, we declare $w$
the winner.

Consider Elections 1 and 2 in \autoref{tab:Example}. As Ranked Pairs elections,
we assign the scores shown in \autoref{tab:ExampleStrengths} to each pair.  For
Election 1, the sorted positive majorities are  ($B$,$C$), ($A$,$B$), and ($A$,
$C$).  We first add $B \succ C$, and then $A \succ B$, to $\mathcal{G}$. At
this point, we have established that $A$ is preferred to $B$, and by
transitivity, that $A$ is preferred to $C$. We can stop at this point, as we
have established enough preference relationships to declare $A$ as the winner,
and cannot add a later preference, or generate a new transitive inference, that
will be inconsistent with these relationships.  In Election 2, the sorted list
of positive majorities is ($C$,$B$), ($C$,$A$), and ($A$, $B$). After the first
two preference relationships are added to $\mathcal{G}$, we have established
that $C$ is the winner. The Ranked Pairs winner always coincides with the
Condorcet winner if one exists.

\begin{table}[t]
\centering
\caption{Pairwise scores for pairs in Elections 1 and 2 of
\autoref{tab:Example}.}
\label{tab:ExampleStrengths}
\begin{tabular}{|l|r|r|}
\hline
 & Election 1 & Election 2 \\
\hline
$s(A,B)$ &    2700 &    1000 \\ 
$s(B,A)$ & $-$2700 & $-$1000 \\ 
$s(A,C)$ &    2300 & $-$4000 \\ 
$s(C,A)$ & $-$2300 &    4000 \\ 
$s(B,C)$ &    7300 & $-$6000 \\ 
$s(C,B)$ & $-$7300 &    6000 \\ 
\hline
\end{tabular}
\end{table}

Consider Election 3 shown in \autoref{tab:Example2}.  The ranked positive
majorities are $(B,D), (A,B), (D,A), (B,C), (C,D),$ and $(A,C)$.  We add $B
\succ D$ and then $A \succ B$ to $\mathcal{G}$. We ignore $(D,A)$ since we have
already inferred $A \succ D$ by transitivity. We then add $B \succ C$ to
$\mathcal{G}$. We now have enough information to declare $A$ the winner, as
they are preferred to all other candidates through transitivity. Note that when
treated as an IRV election, $B$ is the winner. 

To audit a Ranked Pairs election, we must check that all preference statements
between pairs of candidates that were \textit{ultimately used} to establish
that the reported winner $w$ won do in fact hold.  Denote the set of preference
relationships we commit to (i.e., that we add to $\mathcal{G}$) by the Ranked
Pairs tabulation process up to the point at which the winning candidate is
established as $\mathcal{M}$.  In Election 1 of \autoref{tab:Example}, the
winner is established after the first two commits. Thus, $\mathcal{M}$ $=$ \{$B
\succ C$, $A \succ B$\}. For Election 2 of \autoref{tab:Example}, $\mathcal{M}
= \{C \succ B, C \succ A \}$. For Election 3 in \autoref{tab:Example2},
$\mathcal{M} = \{ B \succ D, A \succ B, B \succ C\}$.

Let $\mathcal{T}$ denote the set of \textit{transitively inferred} preferences
$i \succ j$ that were made in the tabulation process up to the point at which
the winner is established. Each such inference is associated with a path in
$\mathcal{G}$ from $i$ to $j$. We denote the set of preferences that were used
to form the transitive inference $i \succ j$ as $\basis(i \succ j)$. This
consists of all preference relationships along a path from $i$ to $j$ in
$\mathcal{G}$.  In Election~1 of \autoref{tab:Example}, $\mathcal{T} = \{A
\succ C\}$ and $\basis(A \succ C) = \{A \succ B, B \succ C\}$. For Election 3
in \autoref{tab:Example2}, $\mathcal{T} = \{A \succ D, A \succ C\}$. For these
transitive inferences, $\basis(A \succ D) = \{A \succ B, B \succ D\}$, and
$\basis(A \succ C) = \{A \succ B, B \succ C\}$.

We now define the assertions $\mathcal{A}$ required to audit a Ranked Pairs
election.  First, for each $w \succ j \in \mathcal{M}$, where $w$ is the
reported winner,  we must check that  $s(w, j) > 0$.  This corresponds to
checking that $(w,j)$ is a positive majority, where $T(w \succ j) - T(j \succ
w) > 0$.  This can be achieved as outlined in \autoref{sec:Condorcet}.

Next, we must check that any transitive inference $w \succ c$, from
$\mathcal{T}$, that we used to declare $w$ the winner could not have been
contradicted by a pair $(c, w)$ that, in the true outcome, was actually
stronger than one or more of the preferences used to infer $w \succ c$ in the
reported outcome. In Example 1 of \autoref{tab:Example}, this could occur if
the pair $(C,A)$ actually had a strength of 8000, for example, in place of
$-2300$. If this were the case, $C \succ A$ should have been the first
preference committed, ultimately leading to $B$ being declared the winner in
place of $A$. 

Note that in Ranked Pairs elections where we have a Condorcet winner, we could
simply verify that the reported winner was the Condorcet winner, irrespective
of whether transitive inferences were used in the tabulation process. This
would remove the need for an additional type of assertion, which we present in
the next section. For most election instances we consider in our Results
(\autoref{sec:Results}), our Ranked Pairs auditing method reduces to checking
the set of assertions for verifying that the reported winner is the Condorcet
winner (see \autoref{sec:Condorcet}). This is because transitive inferences
were not used to establish the winner in these cases.  However, checking these
transitive inferences, when they are used, through the assertions developed in
the next section, can sometimes be more efficient.

\subsection{Assertions and Assorters for Transitive Inferences}

For each transitive inference of the form $w \succ c \in \mathcal{T}$, we must
check that:
\begin{equation*}
    s(i,j) > s(c,w),  \quad \forall i \succ j \in \basis(w \succ c).
\end{equation*}
We can translate this expression into the following:
\begin{eqnarray}
    T(i \succ j) - T(j \succ i) & > & T(c \succ w) - T(w \succ c) \notag \\
    T(i \succ j) + T(w \succ c)   -   T(c \succ w) - T(j \succ i)
                                & > & 0 \label{eqn:Tally1}
\end{eqnarray}
We use the approach of Blom et al.\ \cite{blom2021assertion} to construct an
assorter for an assertion of the form shown in \autoref{eqn:Tally1}. We first
form a proto-assorter
\begin{equation*}
    g(b) = b_1 + b_2 - b_3 - b_4,
\end{equation*}
where $b$ is a ballot and each $b_i$ is the number of votes the ballot $b$
contributes to the category $t_i$, where $t_1 = T(i \succ j)$, $t_2 = T(w \succ
c)$, $t_3 = T(c \succ w)$, and $t_4 = T(j \succ i)$. We then form an assorter,
$h(b)$, for our assertion in \autoref{eqn:Tally1} using Equation 2 of Blom et
al.\ \cite{blom2021assertion}. This equation states that $h(b) = \frac{g(b) -
a}{-2a}$ where $a$ denotes the minimum value of $g(b)$ for any $b$. In our
case, $a = -2$, giving
\[
    h(b) = \frac{g(b) + 2}{4}.
\]
The assorter $h$ calculates the mean score $\bar{h}$ over the ballots $b$
examined in an audit.  By construction $\bar{h} > 1/2$ if and only if the
assertion $s(i,j) > s(c,w)$ holds, to make it fit into the SHANGRLA testing
framework \cite{shangrla}.

\subsection{Correctness of Audit Assertions}

The assertions in the Ranked Pairs audit are then:
\begin{align}
\mathcal{A} =
 &\{ s(w,c) > 0      : w \succ c \in \mathcal{M} \} \, \cup \\ \nonumber
 &\{ s(i,j) > s(c,w) : i \succ j \in \basis(w \succ c), w \succ c\in \mathcal{T}\}.
\end{align}
We now show that if these assertions are all verified to risk limit $\alpha$
then the declared winner is correct with risk limit $\alpha$.

\begin{theorem}
If the declared winner $w$ is not the correct winner of a Ranked Pairs
election, then the probability that an audit verifies all the assertions in
$\mathcal{A}$ is at most $\alpha$, where $\alpha$ is the risk limit of the
audit of each individual assertion.
\end{theorem}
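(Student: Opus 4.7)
My plan is to argue the contrapositive: if every assertion in $\mathcal{A}$ is true with respect to the actual tallies, then $w$ is the true Ranked Pairs winner. Combined with the per-assertion SHANGRLA guarantee, this yields the stated risk bound. The probabilistic half is quick: if any one assertion in $\mathcal{A}$ is false in truth, then SHANGRLA guarantees the audit verifies it with probability at most $\alpha$, and since overall certification requires \emph{every} assertion to be verified, the miscertification probability is bounded by this single $\alpha$. No union bound is needed because one failed verification is enough to stop certification.

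The deterministic claim is the substantive part. I would fix the true pairwise scores $s_T(\cdot,\cdot)$ and analyse the true Ranked Pairs tabulation, which processes positive majorities in decreasing $s_T$ order. I would then show that for every candidate $c \neq w$, a directed path from $w$ to $c$ exists in the true graph by the time the true tabulation could declare a winner. When $w \succ c \in \mathcal{M}$, the assertion $s(w,c) > 0$ forces $(w,c)$ to be a true positive majority, so either it is added directly, or it is rejected because some path $w \to \cdots \to c$ already exists, and in both subcases $w$ reaches $c$. When $w \succ c \in \mathcal{T}$ via a basis path $w = c_0 \to c_1 \to \cdots \to c_k = c$, each basis edge satisfies $s_T(c_{i-1},c_i) > s_T(c,w)$, so every basis edge (or its reverse, should its true strength be negative) is processed strictly before $(c,w)$. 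I would then induct along the basis path to show that a $w \to c$ route survives into the true graph before $(c,w)$ is considered.

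The main obstacle is the transitive case, because the true processing order can differ substantially from the reported one: a basis edge could in principle be rejected by a previously installed reverse cycle, or could appear in truth only as its reverse. I would handle this inductively by showing that any such rejection or reversal must itself have placed edges reaching past the affected vertex from $w$; otherwise the cycle responsible for the rejection would carry a chain of strengths contradicting the bounds imposed by the assertions on $\basis(w \succ c)$. Establishing this structural stability of $w$'s reachability under permuted processing order is the crux; once it is in hand, the true Ranked Pairs rule declares $w$ before any alternative winner can be established, closing the contrapositive and hence the theorem.
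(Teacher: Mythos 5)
Your overall skeleton---reduce the theorem to the deterministic claim ``all assertions in $\mathcal{A}$ true $\Rightarrow$ $w$ is the true Ranked Pairs winner,'' then invoke the single-assertion SHANGRLA guarantee with no union bound---matches the paper, which states the same thing contrapositively (assume the true winner $w'\neq w$ and exhibit a false assertion). The problems are in your handling of the deterministic claim. First, your $\mathcal{M}$ case has the rejection rule backwards: the tabulation skips the pair $(w,c)$ when a path from $c$ to $w$ already exists in $\mathcal{G}$, not a path from $w$ to $c$. So in the genuine rejection subcase $c$ reaches $w$, acyclicity then forbids $w$ from ever reaching $c$, and your conclusion ``in both subcases $w$ reaches $c$'' fails. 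The assertion $s(w,c)>0$ alone cannot exclude this, because a cycle of strictly stronger true majorities passing from $c$ back to $w$ is perfectly consistent with it. The paper confronts exactly this situation (its subcase $s(w',w)<0$) and resolves it not by showing $w$ reaches $w'$, but by pivoting to the predecessor $w''$ of $w$ on the true path from $w'$ to $w$: that edge satisfies $s(w'',w) > s(w,w') > 0$, and since $w$ reaches $w''$ in the \emph{reported} graph, either $w\succ w''\in\mathcal{M}$, so the audited assertion $s(w,w'')>0$ is false, or $w\succ w''\in\mathcal{T}$ and the transitive-case argument applies to $w''$. You need some version of this pivot; without it your $\mathcal{M}$ case does not close.

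Second, the transitive case. You correctly isolate the crux---that the true processing order may reject basis edges or encounter them only in reversed form---but you do not prove it. ``I would handle this inductively by showing that any such rejection or reversal must itself have placed edges reaching past the affected vertex from $w$'' is a conjecture, not an argument, and it is not obvious it survives the case where a basis edge $i\succ j\in\basis(w\succ c)$ has negative true score: there the audited inequality $s(i,j)>s(c,w)$ gives no control over when the reversed pair $(j,i)$ is processed relative to $(w,c)$. In fairness, the paper itself dispatches this step in a single sentence (``if all of these facts were correct then the ranked pairs algorithm on the true counts would find $w\succ w'$ by transitive closure''), so you have located precisely the point where the published proof is thinnest. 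But a proposal whose conclusion is ``establishing this structural stability is the crux; once it is in hand the theorem follows'' has identified the theorem's difficulty rather than proved it, and the directional error above means the part you did argue in detail is also not yet correct.
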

\begin{proof}
Let $\mathcal{M}$ be the set of ranked pairs committed to $\mathcal{G}$ in the
reported election, $\mathcal{T}$ the transitively inferred preferences from
$\mathcal{M}$.  Let $\mathcal{M}'$ be the set of ranked pairs committed to
$\mathcal{G}$ for the actual election results, and $\mathcal{T}'$ the
transitively inferred preferences from $\mathcal{M}'$.  Assume that $w' \neq w$
is the actual winner.

Suppose that $w \succ w' \in \mathcal{T}$, so we audit $s(i,j) > s(w',w)$ for
all $i \succ j \in basis(w \succ w')$.  If all of these facts were correct then
the ranked pair voting algorithm (on the true counts) would find $w \succ w'$
by transitive closure. Contradiction. So at least one of them must not hold,
and the audit will accept it with probability at most $\alpha$. 

Otherwise $w \succ w' \in \mathcal{M}$. If $s(w',w) > 0$ then this contradicts
the audited assertion, which will be accepted with probability at most
$\alpha$.  So suppose $s(w',w) < 0$. Since $w'$ beats $w$ there is a $w''$ such
that $w' \succ w'' \in \mathcal{T}'$ and $w'' \succ w' \in \mathcal{M}'$ and
$s(w'',w) > s(w,w') > 0$.  Now in the reported election either $w \succ w'' \in
\mathcal{M}$ in which case the assertion $s(w,w'') > 0$ will be accepted with
probability at most $\alpha$, or $w \succ w'' \in \mathcal{T}$ and we use the
argument of the previous paragraph to show that the audit will accept with
probability at most $\alpha$.
\qed
\end{proof}

Consider Election 3 shown in \autoref{tab:Example2}.
In the Ranked Pairs election we established $A$ as the winner using
$\mathcal{M} = \{B \succ D, A \succ B, B \succ C\}$ and
$\mathcal{T} = \{A \succ D, A \succ C\}$, where
$\basis(A \succ D) = \{ A \succ B, B \succ D \}$ and
$\basis(A \succ C) = \{ A \succ B, B \succ C \}$.
So the assertions we need to verify are $s(A,B) > 0$;
$s(A,B) > s(D,A)$ and $s(B,D) > s(D,A)$; and
$s(A,B) > s(C,A)$ and $s(B,C) > s(C,A)$.

Note that Tideman describes a particular approach for resolving ties between
sorted majorities \cite{tideman1987independence}. If the choice of which
majority to process first, among those that tie, changes the ultimate winner
then a full manual hand count will be required. The manner in which such ties
are resolved can have an impact on the overall set of assertions formed. For
example, consider a case where we have three positive majorities ($A$, $B$),
($A$, $C$), and ($B$, $C$) in a three-candidate election. The first is the
strongest, while the latter two tie. Of the latter two, if we choose to process
($A$, $C$) first, then our audit will form two assertions: $s(A, B) > 0$ and
$s(A, C) > 0$. If we choose to process ($B$, $C$) first then we will form the
assertions: $s(A, B) > 0$, $s(A,B) > s(C,A)$ and $s(B,C) > s(C,A)$.


\section{RLAs for Minimax Elections}
\label{sec:Minimax}

In a Minimax election, a pairwise score is computed for each pair of
candidates, $c$ and $c'$, denoted $ms(c, c')$. There are variations on how this
score can be
defined\footnote{\url{https://en.wikipedia.org/wiki/Minimax_Condorcet_method},
accessed 14 Mar 2023}. One method, denoted \textit{margins}, is defined by:
\[
    ms(c, c') = T(c \succ c') - T(c' \succ c).
\]
This score computation method is equivalent to that used in Ranked Pairs. We
use this approach when forming assertions to audit Minimax elections.
Variations could be used instead, however their equations must be linear to be
used within the assertion-assorter framework of Blom et al.\
\cite{blom2021assertion}.
%

If there is a Condorcet winner, then this candidate wins. Suppose it is
candidate $w$, then we have $ms(w, c) > 0$ for all $c \in \mathcal{C} \setminus
\{w\}$ when either the margins or winning votes scoring method is used.
Otherwise, the winning candidate is the one with the smallest loss in pairwise
contests with each other candidate.

Consider the example elections in \autoref{tab:Example}.
\autoref{tab:ExampleStrengths} records the pairwise scores for each pair of
candidates. In both of our example elections, candidate $A$ is the Condorcet
winner. In Election 1, $ms(A, B) = 2700$ and $ms(A, C) = 2300$. In Election 2,
$ms(C, B) = 6000$ and $ms(C, A) = 4000$.

In the case where we have a Condorcet winner $w$ under Minimax, we simply need
to audit that $ms(w, c) > 0$ for all $c \in \mathcal{C} \setminus \{w\}$, as
described in \autoref{sec:Condorcet}.

In the case where we do not have a Condorcet winner, we compute each candidate
$c$'s largest margin of loss, $LL(c)$. The candidate $c$ with the smallest
$LL(c)$ is the winner. Consider a case with the pairwise scores shown below
(left). In this example, we have the largest losses shown on the right. In this
case, candidate $B$ has the smallest largest loss and is the winner.

\begin{longtable}{p{110pt}p{110pt}}
$ms(A, B) = 2000$ & $LL(A) = 8000$ \\
$ms(B, C) = 5000$ & $LL(B) = 2000$ \\
$ms(C, A) = 8000$ & $LL(C) = 5000$ \\
\end{longtable}
\addtocounter{table}{-1}

To audit a Minimax election, in the event that a Condorcet winner does not
exist, we can first verify which pairwise defeats represent the strongest
defeat for each candidate. In the example above, we could show that $A \succ B$
is the strongest defeat of $B$ by showing that $ms(A,B) > ms(c,B)$ for all $c
\in \mathcal{C} \setminus \{A\}$. We then need to show that $ms(A, B)$ is less
than the score of the strongest defeat of each other candidate. In the example
above, this reduces to checking that $ms(B,C) > ms(A,B)$ and $ms(C,A) >
ms(A,B)$.


\section{Smith}
\label{sec:Smith}

The \emph{Smith set} in an election refers to the smallest set of candidates
$S$ such that every candidate in $S$ defeats every candidate outside of $S$ in
a pairwise contest. For all $c \in S$ and $c' \in \mathcal{C} \setminus S$, we
have $T(c \succ c') > T(c' \succ c)$. The Smith set always exists and is well
defined\footnote{\url{https://en.wikipedia.org/wiki/Smith_set}, accessed 14 Mar
2023}. If a Condorcet winner exists, they will be the sole member of this set.
If the Smith set contains more than one candidate, IRV or Minimax can be used
to select a single winner from that set. Alternatively, all candidates in the
Smith set can be viewed as winners (if appropriate).

If we have a Condorcet winner, $w$, an audit would proceed by checking that $w$
defeats all other candidates in a pairwise contest. Otherwise, we need to first
check that the reported Smith set is correct. To do so, we first show that $T(c
\succ c') > T(c' \succ c)$ for all $c \in S, c' \not\in S$. Next, we must show
that removing any one candidate from our set would violate this condition:
$\forall c \in S, \exists c' \in S, T(c \succ c') > T(c' \succ c)$. In other
words, that each candidate in the set is defeated by another candidate in the
set. We then audit the resulting Minimax or IRV election over $S$. It may be
that a candidate in the Smith set defeats multiple candidates in the set. For
the purposes of auditing, we choose to check the defeat with the largest
margin. If there are candidates in the Smith set that tie (i.e., $T(c \succ
c')$ $ =$ $ T(c' \succ c)$ for some $c,c' \in S$), then a full manual count is
required.


\section{Kemeny-Young}
\label{sec:KY}

Under Kemeny-Young, we start by computing pairwise scores for each pair of
candidates $(c,c')$, $T(c \succ c')$. We then imagine all possible complete
orders (rankings) among the candidates in the election. We assign a score to
each ranking using the pairwise scores we have just computed. Consider the
ranking $[A, B, C]$ in an election with candidates $A$, $B$ and $C$. The
ranking score is equal to $T(A \succ B) + T(A \succ C) + T(B \succ C)$. For
each candidate $c$ that appears above another $c'$ in a ranking, we add $T(c
\succ c')$ to the ranking score. The winning candidate is the candidate ranked
first in the highest scoring ranking. 

We can view each ranking $\pi$ as an entity with a tally, $T(\pi)$. The tally
for the ranking $\pi = [A, B, C]$, for example, is $T(\pi) = T(A \succ B) + T(A
\succ C) + T(B \succ C)$. Let $\pi_r$ denote the reported highest scoring
ranking. For elections with a very small number of candidates, we can form an
audit in which we check that $T(\pi_r) > T(\pi')$ for every possible ranking
$\pi'$ with a different first-ranked candidate to $\pi_r$. For an election with
$k$ candidates, an audit with $(k-1)!$ assertions is formed. A 13 candidate
election, however, will require $4.8\times10^8$ assertions! 

While it is technically possible to form a RLA that is not a full recount for a
Kemeny-Young election, all election instances we consider in this paper have
too many candidates for this method to be practical. We are not aware of any
other methods for generating efficient RLAs for Kemeny-Young elections.


\section{Other Condorcet Methods}

Some other Condorcet methods, such as Schulze and Copeland, are not auditable
by the assertion-assorter framework as it stands. To form a RLA using this
framework, we need to be able to check that the reported winner won through a
series of comparisons over sums of ballots. Under both the Schulze and Copeland
methods, we use such comparisons to perform a meta-level reasoning step.

Under Copeland, for example, we compute each candidate $c$'s \emph{Copeland
score $CS(c)$}, which is the number of candidates $c'$ for which $T(c \succ c')
> T(c' \succ c)$ plus a half times the number of candidates $c'$ for which $T(c
\succ c') = T(c' \succ c)$. We then elect the candidate with the highest
Copeland score.

Under the Schulze method, we assign scores to each pair of candidates using the
winning votes method as per Minimax (see \autoref{sec:Minimax}). Consider a
graph where for each pair of candidates, $(c,c')$, we have a directed edge from
$c$ to $c'$ with a weight equal to $s(c, c')$. We define the strength of a path
in this graph between candidates $c$ and $c'$ as the weight of the weakest link
along this path. For each pair of candidates, we compute the strength of the
strongest path between the pair. Where there are multiple paths between the
candidates, the strongest path is the one with the largest weight on its
weakest link. If there is no path, the strength of their strongest path is set
to zero. Let us denote the strength of the strongest path between $c$ and $c'$
as $p(c, c')$. We say that $c \succ c'$ if $p(c, c') > p(c', c)$. The winner is
the candidate $w$ for which $w \succ c$ for all $c \in \mathcal{C} \setminus
\{w\}$.


\section{Results}
\label{sec:Results}

We consider the set of IRV elections conducted in the 2015 and 2019 New South
Wales (NSW) Legislative Assembly (lower house) elections, and a number of
US-based IRV elections conducted between 2007 and 2010. For each instance, we
reinterpret it as Condorcet, Ranked Pairs, Minimax and Smith elections.  We
report the estimated difficulty of conducting a comparison RLA for each
instance when viewed as an IRV election or one of the alternative Condorcet
methods.  We assume a risk limit of 5\%, and an error rate of
0.002.
RAIRE \cite{blom2019raire} is used to generate assertions for auditing each
instance as an IRV election.  Note that the intention behind the reporting of
these results is not to recommend one type of election over others, but to
demonstrate the practicality, or lack thereof, of the auditing methods we have
proposed.

We estimate the sample size (ASN) required to audit a set of assertions
$\mathcal{A}$, with a chosen SHANGRLA risk function (we used
Kaplan-Kolmogorov), through simulation. For each assertion $a \in \mathcal{A}$,
we performed 2000 simulations in which we randomly distributed errors across
the population of auditable ballots, and determined how many ballots needed to
be sampled for the risk to fall below the desired threshold. We took the median
of the resulting sample sizes to compute an anticipated sample size for the
assertion. We took the largest of these sample sizes, across $\mathcal{A}$, as
the expected sample size required for the audit. We used this process to
estimate required sample sizes for all election types.

Across all the IRV instances we considered, the same winner was declared when
the ballots were tabulated according to the rules of IRV, Condorcet, and Ranked
Pairs. All instances have a Condorcet winner. In all but one election---a US
instance, Pierce County Executive 2008---the estimated difficulty of auditing
each election as either a Ranked Pairs or Condorcet election was the same. For
Pierce County Executive 2008, checking each $T(w \succ c) - T(c \succ w) > 0$
assertion requires an estimated 627 ballot polls, the same ASN as the IRV
audit. When audited as a Ranked Pairs election, an estimated 507 ballot polls
are required. This is because in the Ranked Pairs election we are able to
declare a winner before committing to all $(w, c)$ pairs, through the use of a
transitive inference. This means that in the audit, we can avoid checking one
of the $w \succ c$ comparisons and instead check some easier assertions to
verify the transitive inference used.

\autoref{tab:ResultsNSW15-19} reports the expected sample sizes required to
audit selected elections that took place in the 2015--19 NSW Legislative
Assembly elections as either IRV, Condorcet, or Ranked Pairs.
\autoref{tab:ResultsUS} reports the same for selected US instances. Instances
for which the expected sample sizes differed substantially in the different
contexts are in bold. In general, there was no substantial difference in these
expected sample sizes when auditing an instance as an IRV, Condorcet, or Ranked
Pairs election.

\begin{table}[p]
\centering
\caption{Estimated sample sizes, expressed as both a number of ballots and
percentage of the total ballots cast, required to audit selected IRV elections
from the 2015--19 NSW Legislative Council elections (as IRV using RAIRE, Ranked
Pairs [RP], and Condorcet [CDT]). Instances where the ASN for the audits across
the different election types are substantially different are in bold. All
instances have a Condorcet winner.}
\label{tab:ResultsNSW15-19}
\begin{tabular}{l|rr|rr|rr}
\toprule
\textsc{Instance} & \multicolumn{2}{|c|}{\textsc{RAIRE IRV}} &\multicolumn{2}{|c|}{\textsc{RP}} &\multicolumn{2}{|c}{\textsc{CDT}}  \\
         & \multicolumn{2}{|c|}{\textsc{ASN (\%)}}  &  \multicolumn{2}{|c|}{\textsc{ASN (\%)}} &  \multicolumn{2}{|c}{\textsc{ASN (\%)}}  \\      
\midrule
\multicolumn{7}{l}{2015} \\
\midrule
Albury	&	31	&	0.06\%	&	28	&	0.06\%	 & 28 & 0.06\%  	\\
Auburn	&	74	&	0.15\%	&	74	&	0.15\%	 & 74 & 0.15\%  \\
Ballina	&	155	&	0.32\%	&	137	&	0.28\%	& 137 & 0.28\%  \\
Balmain	&	99	&	0.20\%	&	99	&	0.20\%	&  99& 0.20\%  	\\
Clarence	&	42	&	0.09\%	&	42	&	0.09\%	&  42& 0.09\% \\
Coffs Harbour	&	32	&	0.07\%	&	27	&	0.06\% 	&  27& 0.06\% \\
Coogee	&	137	&	0.29\%	&	137	&	0.29\%	& 137 & 0.29\%  \\
East Hills	&	1309	&	2.60\%	&	1309	&	2.60\% 	& 1309 & 2.60\%  \\
Gosford	&	3889	&	7.70\%	&	3889	&	7.70\% & 3889& 7.70\% \\
Granville	&	207	&	0.43\%	&	207	&	0.43\% & 207 & 0.43\%  \\
{\bf Lismore}	&	{\bf 1138}	&	{\bf 2.35\%}	&	{\bf 4689}	&	{\bf 9.70\%} &  {\bf 4689}	&	{\bf 9.70\%}  \\
Manly	&	15	&	0.03\%	&	15	&	0.03\% & 15 & 0.03\%  \\
Maroubra	&	35	&	0.07\%	&	35	&	0.07\% & 35 & 0.07\%  \\
Monaro	&	152	&	0.32\%	&	152	&	0.32\% & 152& 0.32\% \\
Mount Druitt	&	26	&	0.05\%	&	26	&	0.05\% & 26& 0.05\% \\
Strathfield	&	227	&	0.46\%	&	227	&	0.46\% & 227& 0.46\% \\
Summer Hill	&	45	&	0.09\%	&	45	&	0.09\% & 45 & 0.09\%  \\
Sydney	&	54	&	0.12\%	&	54	&	0.12\% & 54& 0.12\% \\
Tamworth	&	41	&	0.08\%	&	38	&	0.08\%	& 38& 0.08\%  \\
The Entrance	&	1596	&	3.18\%	&	1596	&	3.18\% & 1596& 3.18\% \\
\midrule
\multicolumn{7}{l}{2019} \\
\midrule
Albury	&	25	&	0.05\%	&	25	&	0.05\%	 & 25	&	0.05\%\\
Auburn	&	45	&	0.09\%	&	45	&	0.09\%	&45	&	0.09\% \\
Ballina	&	98	&	0.19\%	&	97	&	0.19\%	&97	&	0.19\%	\\
Balmain	&	44	&	0.09\%	&	44	&	0.09\%	&  44	&	0.09\% \\
Coogee	&	248	&	0.52\%	&	248	&	0.52\%	& 248	&	0.52\%		\\
Cronulla	&	20	&	0.04\%	&	19	&	0.04\%	& 19	&	0.04\%	 \\
Drummoyne	&	27	&	0.05\%	&	25	&	0.05\% &25	&	0.05\% 	\\
Dubbo	&	234	&	0.46\%	&	234	&	0.46\% &234	&	0.46\%	\\
East Hills	&	1173	&	2.30\%	&	1173	&	2.30\%	&1173	&	2.30\% \\
Hawkesbury & 25	&	0.05\%	&	25	&	0.05\%	& 25	&	0.05\%\\
Holsworthy	& 130	&	0.25\%	&	130	&	0.25\% &130	&	0.25\%  \\
Keira	&	19	&	0.04\%	&	19	&	0.04\% & 19	&	0.04\% \\
Kogarah	&	236	&	0.49\%	&	236	&	0.49\% & 236	&	0.49\%  \\
{\bf Lismore}	&	{\bf 1363}	&	{\bf 2.71\%}	&	{\bf 313}	&	{\bf 0.62\%} & {\bf 313}	&	{\bf 0.62\%}\\
Mulgoa	&	34	&	0.06\%	&	34	&	0.06\% & 34	&	0.06\% \\
Murray	&	129	&	0.26\%	&	129	&	0.26\% &129	&	0.26\%\\
 Newcastle	&	25	&	0.05\%	&	24	&	0.05\% & 24	&	0.05\% \\
 Oxley	&	34	&	0.07\%	&	28	&	0.06\%	 & 28	&	0.06\%	 \\
 Penrith	&	333	&	0.64\%	&	333	&	0.64\% &  333	&	0.64\% \\
\bottomrule
\end{tabular}
\end{table}

\begin{table}[!t]
\centering
\caption{Estimated sample sizes, expressed as a number of ballots and
percentage of the total ballots cast, required to audit a set of US IRV
elections (as IRV using RAIRE, Ranked Pairs [RP], and Condorcet [CDT]).
Instances where the ASNs are substantially different across election types are
in bold.  CC, CA, and CE denote City Council, County Assessor, and County
Executive.  All instances have a Condorcet winner.}
\label{tab:ResultsUS}
\begin{tabular}{l|rr|rr|rr}
\toprule
\textsc{Instance} & \multicolumn{2}{|c|}{\textsc{RAIRE IRV}} &\multicolumn{2}{|c}{\textsc{RP}} &\multicolumn{2}{|c}{\textsc{CDT}}  \\
         & \multicolumn{2}{|c|}{\textsc{ASN (\%)}}  &  \multicolumn{2}{|c}{\textsc{ASN (\%)}}  &  \multicolumn{2}{|c}{\textsc{ASN (\%)}}  \\      
\midrule
Aspen 2009 CC	&	249	&	10\%	&	249	&	10\%	  &  249& 10\%   \\
Aspen 2009 Mayor	&	100	&	3.96\%	&	142	&	5.60\%	      &  142& 5.60\% \\
Berkeley 2010 D1 CC	&	18	&	0.32\%	&	16	&	0.28\% & 16& 0.28\%	\\
Berkeley 2010 D4 CC	&	31	&	0.65\%	&	31	&	0.65\% & 31&  0.65\%  	\\
Berkeley 2010 D7 CC	&	40	&	0.96\%	&	40	&	0.96\% & 40 & 0.96\%	 \\
Berkeley 2010 D8 CC	&	17	&	0.37\%	&	17	&	0.37\% & 17 & 0.37\% \\
Oakland 2010 D4 CC	&	33	&	0.16\%	&  31	&	0.15\% & 31& 0.15\% 	\\
Oakland 2010 D6 CC	&	18	&	0.14\%	&	16	&	0.12\% &  16& 0.12\%	 \\
Oakland 2010 Mayor	&	499	&	0.42\%	&	499	&	0.42\%	           & 499 & 0.42\% \\
Pierce 2008 CC	&	70	&	0.18\%	&	70	&	0.18\%	   & 70 & 0.18\%   \\
Pierce 2008 CA	&	1153	&	0.44\%	&	1153	&	0.44\%  & 1153 & 0.44\%	 \\
Pierce 2008 CA	&	64	&	0.04\%	&	64	&	0.04\%	       & 64 &  0.04\%  \\
{\bf Pierce 2008 CE}	&	{\bf 624}	&	{\bf 0.21\%}	&	{\bf 507}	&	{\bf 0.17\%} & {\bf 624}& {\bf 0.21\%} 	\\
San Francisco Mayor 2007	&	10	&	0.01\%	&	9	&	0.01\%	 & 9 & 0.01\%  \\
\bottomrule
\end{tabular}
\end{table}

\subsection{IRV vs Ranked Pairs}

In a small number of cases---Lismore (NSW) in 2015 and 2019; and Pierce County
Executive (2008)---there was a substantial difference in the auditing
difficulty in the two contexts. For Lismore (2015), we expect to audit the IRV
election with a sample size of 1138 ballots, and the Ranked Pairs with 4689
ballots. For Lismore (2019), the situation is reversed, with an estimated 313
ballots required to audit the Ranked Pairs election and 1363 ballots for the
IRV. In the Pierce County Executive (2008) election, we expect to sample 624
ballots for the IRV election and 507 ballots for the Ranked Pairs. All three of
these instances share a common feature: when tabulated as an IRV election, the
candidate who is eliminated in the last round of elimination determines the
winner.

In Lismore (2015) [1138 ballots IRV vs 4689 ballots Ranked Pairs], the
Nationals (NAT) candidate wins. In the last round of elimination, this
candidate, alongside a Green (GRN) and a Country Labor (CLP) remain standing.
Their tallies at this stage are 20567, 12771 and 12357 votes, respectively.
Given the nature of Australian politics, we would expect the majority of
ballots sitting with the GRN at this stage to flow on to the CLP, if they were
eliminated. Conversely, if the CLP were eliminated, we would expect ballots to
flow on to both the NAT and GRN candidates. In this case, the CLP is
eliminated, and we are left with the NAT on 21660 votes and the GRN on 19310.
In the Ranked Pairs variation, the most difficult assertion to check is that
$s(NAT, CLP) > 0$, which equates to $T(NAT \succ CLP) - T(CLP \succ NAT) > 0$.
The difference in these tallies is just 186 votes.  The most difficult
assertion RAIRE has to check is that the GRN is not eliminated before the CLP
in the context where just they and the NAT remain. Here, the tallies of the GRN
and CLP differ by 414 votes. 

In Lismore (2019) [1363 ballots IRV vs 313 ballots Ranked Pairs], the CLP
candidate wins. In the last round of elimination, we have the CLP on 12860
votes, the GRN on 12500, and the NAT on 20094. In this case, the GRN is
eliminated leaving the NAT on 20712 votes, and the CLP on 21862. In the Ranked
Pairs variation, the smallest tally difference we need to check is that $T(CLP
\succ NAT) - T(NAT \succ CLP) > 0$. In the reported results, the LHS equals
1150 votes. Checking that $T(NAT \succ GRN) - T(GRN \succ NAT) > 0$ is much
simpler, with the LHS equal to 15494 votes. RAIRE has to show that the CLP
cannot be eliminated before the GRN in that last round elimination. This is
more difficult, with a difference of only 360 votes separating the two.

\subsection{Elections without a Condorcet winner}

All instances in \autoref{tab:ResultsNSW15-19} and \autoref{tab:ResultsUS} have
a Condorcet winner. We have found several ranked vote datasets that, when
treated as a single-winner election, do not have a Condorcet winner.
\autoref{tab:ResultsNCW} reports the estimated sample sizes required to audit
these instances as IRV (using RAIRE), Ranked Pairs, Minimax, and Smith. RAIRE
did not terminate in a reasonable time frame (24 hours) in 6/9 of these
instances. RAIRE relies on being able to prune large portions of the space of
alternate election outcomes through carefully chosen assertions. In these
instances, RAIRE was unable to do this.  The number of votes for each ballot
signature in the Preflib instances were multiplied by 1000 to form larger
elections.

\begin{table}[!t]
\centering
\caption{Estimated sample sizes, expressed as a number of ballots and
percentage of the total ballots cast, required to audit a set of ranked vote
election instances without a Condorcet winner (as IRV using RAIRE, Ranked
Pairs, Minimax [MM], and Smith). Instances where the ASN for the audits are
substantially different across election types are in bold. A `--' denotes that
assertions could not be found within 24 hours by the associated algorithm, and
`$\infty$' denotes that a full manual hand count is required. The overall ASN
for each Smith RLA is the maximum of the cost of verifying the Smith set and
verifying the winner from the Smith set using either Mimimax of IRV.}
\label{tab:ResultsNCW}
\begin{tabular}{l||rr||rr||rr||rr|rr|rr}
\toprule
\textsc{Instance} & \multicolumn{2}{c||}{\textsc{RAIRE IRV}} &\multicolumn{2}{c||}{\textsc{RP}}  &\multicolumn{2}{c||}{\textsc{MM}} & \multicolumn{6}{c}{\textsc{Smith}}\\
\cline{8-13}
         &  \multicolumn{2}{c||}{} & \multicolumn{2}{c||}{} & \multicolumn{2}{c||}{}  & \multicolumn{2}{c|}{\textsc{Verify SS}} & \multicolumn{2}{c|}{\textsc{Minimax}} & \multicolumn{2}{c}{\textsc{IRV}} \\
         & \multicolumn{2}{c||}{\textsc{ASN (\%)}}  &  \multicolumn{2}{c||}{\textsc{ASN (\%)}}    & \multicolumn{2}{c||}{\textsc{ASN (\%)}}  & \multicolumn{2}{c|}{\textsc{ASN (\%)}} & \multicolumn{2}{c|}{\textsc{ASN (\%)}} & \multicolumn{2}{c}{\textsc{ASN (\%)}} \\      
\midrule
\multicolumn{13}{l}{2021 NSW Local Government Elections (originally STV)}\\
\midrule
Byron &    \multicolumn{2}{c||}{--}  & \multicolumn{2}{c||}{$\infty$}  &   \multicolumn{2}{c||}{$\infty$}  & 844& 4.8\%& \multicolumn{2}{c|}{$\infty$} & 327& 1.84\% \\
Leeton &    \multicolumn{2}{c||}{--}    & 1219 & 2\% & \multicolumn{2}{c||}{$\infty$}  & \multicolumn{2}{c|}{$\infty$} & \multicolumn{2}{c|}{$\infty$} & \multicolumn{2}{c}{$\infty$}\\
Parkes &     \multicolumn{2}{c||}{--}    & \multicolumn{2}{c||}{$\infty$}   & \multicolumn{2}{c||}{$\infty$} & \multicolumn{2}{c|}{$\infty$} & \multicolumn{2}{c|}{$\infty$} & \multicolumn{2}{c}{$\infty$}\\
Yass Valley &   \multicolumn{2}{c||}{--}  & 1624 & 1.7\%   &   \multicolumn{2}{c||}{$\infty$}  & \multicolumn{2}{c|}{$\infty$} & \multicolumn{2}{c|}{$\infty$} & \multicolumn{2}{c}{$\infty$}\\
\midrule
\multicolumn{13}{l}{Preflib Election Data} \\
\midrule
ED-7-5.soi & \multicolumn{2}{c||}{--} & 2828 & 2.7\%   & 2828 & 2.7\% & \multicolumn{2}{c|}{$\infty$} & \multicolumn{2}{c|}{$\infty$} & \multicolumn{2}{c}{$\infty$}\\
ED-7-19.soi & \multicolumn{2}{c||}{$\infty$} & 563 & 0.56\%  & 563 & 0.56\% &\multicolumn{2}{c|}{$\infty$} & \multicolumn{2}{c|}{$\infty$} & \multicolumn{2}{c}{$\infty$}\\
ED-10-26.soi & 56 & 0.35\% & 113 & 0.71\% & \multicolumn{2}{c||}{$\infty$}  &\multicolumn{2}{c|}{$\infty$} & \multicolumn{2}{c|}{$\infty$} & \multicolumn{2}{c}{$\infty$} \\
ED-10-47.soi &120 & 0.71\%& \multicolumn{2}{c||}{$\infty$}  & \multicolumn{2}{c||}{$\infty$}   & 120& 0.71\%& \multicolumn{2}{c|}{$\infty$}  & 59& 0.35\%\\
ED-34-1.soi & \multicolumn{2}{c||}{--}& 4577 & 1.2\% & \multicolumn{2}{c||}{$\infty$}  & \multicolumn{2}{c|}{$\infty$} & \multicolumn{2}{c|}{$\infty$} & \multicolumn{2}{c}{$\infty$}\\
\bottomrule
\end{tabular}
\end{table}

For the Smith method, \autoref{tab:ResultsNCW} reports the difficulty of both
checking the correctness of the Smith set, and auditing the whole election
using either Minimax or IRV to find a winner from the Smith set. For the Byron
instance, with 32 candidates, the estimated sample size required to verify the
correctness of the Smith set is 844 ballots (4.8\% of the total ballots cast).
The first stage of verifying the Smith set---checking that each candidate in
the set defeats all candidates outside the set---requires 112 assertions. The
second stage---checking that each candidate within the Smith set defeats
another candidate in the set---requires 4 assertions. Verifying the winner from
the Smith set using IRV requires an estimated sample size of 327 ballots
(1.84\%). The ASN of the overall audit, with IRV used to select the winner, is
844 ballots. With Minimax used to select the overall winner, however, a full
manual count is required.

Across the ED instances, with the exception of ED-10-47, there are ties present
between candidates in the Smith set, and a full manual count is required.
Across the Leeton, Parkes, and Yass Valley instances, the second stage of
verifying the correctness of the Smith set involves at least one comparison
with a very small margin. For the ED-10-47 instance, there are tied winners
under Minimax when selecting the winner from the Smith set.

\enlargethispage{\baselineskip}


\section{Conclusion}

We have presented methods for generating assertions sufficient for conducting
RLAs for several Condorcet methods, including Ranked Pairs. We have found that
auditing a ranked vote election as IRV or Ranked Pairs requires similar
estimated sample sizes, in general. Most Ranked Pairs audits reduce to checking
that the reported winner is the Condorcet winner. Where the election does not
have a Condorcet winner, it appears that auditing the instance as a Ranked
Pairs election is generally more practical than if it were an IRV election. We
have considered auditing methods for other Condorcet methods, such as Minimax,
Smith, and Kemeny-Young, however these were generally not practical.

%
%
\bibliographystyle{splncs04}
\bibliography{BIB}

\end{document}